\newtheorem{theorem}{Theorem}
\title{Is Polarization an Inevitable Outcome of Similarity-Based Content Recommendations? -- Mathematical Proofs and Computational Validation}
\author{Minhyeok Lee}
\date{}
\begin{document}

\maketitle

\begin{abstract}
The increasing reliance on digital platforms to access news, cultural products, and educational materials has transformed how individuals form their understandings of the world. Modern recommendation systems, whether driven by statistical factorization techniques or intricate deep learning architectures, operate on a fundamental geometric principle: they guide users toward items deemed 'similar' or 'relevant' based on proximity in a latent space. While this core logic has long been considered benign, by simply presenting users with something that aligns closely with their existing preferences, there is growing concern that it may also foster the emergence of insular communities, often described as echo chambers. Such communities reinforce preexisting viewpoints and limit exposure to alternative perspectives, potentially contributing to the broader phenomenon of polarization observed in contemporary societies. In this work, we bridge concepts from the humanities, social sciences, and computational sciences to examine how even a simplified, similarity-based recommendation strategy can induce spontaneous fragmentation of a once-homogeneous user population into multiple stable clusters. We begin by formulating a minimal mathematical model, representing both users and content as points in a continuous geometric space. Users iteratively move toward median points derived from their locally recommended items, which are selected based on nearest-neighbor criteria. Although this mechanism lacks any explicit ideological bias or manipulation, we demonstrate through mathematical reasoning that it naturally leads users to coalesce into distinct, tightly knit subgroups. These clusters emerge despite the absence of deliberate filtering and persist even under stochastic perturbations. We then corroborate these theoretical insights with computational simulations, systematically exploring how parameters such as population size, adaptation rates, content production probabilities, and noise levels influence the speed and intensity of cluster formation. Our findings reveal that a rich interplay of these parameters can modulate, but not eliminate, the underlying tendency toward polarization. The essential geometry of recommendation spaces thus contains an intrinsic push toward fragmentation, a process that remains robust under a range of conditions. This research neither claims that all recommendation systems inevitably generate harmful polarization nor ignores the multifaceted historical, economic, and psychological factors that shape the real-world discourse. Instead, it highlights that similarity-based retrieval is not a neutral mechanism. By guiding individuals persistently toward locally prevalent content, these systems can accentuate divisions and reduce the likelihood of encountering differing viewpoints. In doing so, this work offers an intellectual scaffold that can inform interdisciplinary dialogues. It encourages the design of interventions, regulatory frameworks, and scholarly critiques that consider the deep geometric underpinnings of digital platforms, recognizing their capacity, however unintended, to reorder the cultural and ideological landscapes of our societies.
\end{abstract}

\section{Introduction}

Modern digital platforms have reshaped the way people interact with and interact with information. Users increasingly rely on recommendation systems to navigate complex information landscapes, from political newsfeeds to cultural content hubs. These platforms often use sophisticated algorithms to select material that is deemed most relevant, interesting, or compatible with the user's past behavior. At first glance, this practice appears to be beneficial because it eases the burden of filtering large information streams. However, a deep concern arises: as users are continually guided toward content that aligns with their current positions, the probability of encountering contrasting viewpoints may diminish. Over time, this process can lead to the formation of densely knit communities where users predominantly see materials that reinforce their beliefs, isolating them from alternative perspectives. Such phenomena have been described in the literature under terms such as 'filter bubbles' and 'echo chambers,' and are frequently implicated in the broader phenomenon of polarization.

The relationship between algorithmic personalization and ideological fragmentation has attracted increasing scholarly attention. This interest emerges from a rich multidisciplinary discourse. In communication studies, empirical research has shown that online environments can foster selective exposure, where individuals gravitate toward information consistent with their predispositions and avoid conflicting content \cite{garrett2009echo,iyengar2009red,stroud2010polarization}. Political communication scholars have demonstrated that self-reinforcing information channels may produce echo chambers that amplify partisan divides \cite{jamieson2008echo,baum2008newmedia,munger2019news}. In the fields of computational communication research and network science, scholars have developed analytic tools to characterize how network structures and recommendation rules can shape information diversity and opinion distribution \cite{shore2018network,cinelli2020echo,cota2019quantifying}. Similar insights have emerged from computational social sciences, where agent-based models and network simulations highlight that even simple local update rules can induce macro-level polarization \cite{axelrod1997complexity,wojcieszak2012strong}. Furthermore, recent investigations have focused on the interplay between platform architectures and the spread of misinformation, noting that locally homogeneous clusters of users can facilitate the persistence of false beliefs \cite{lazer2018science,allen2020evaluating,delvicario2016spreading}.

Existing scholarship has documented the presence of partisan media enclaves and opinion clustering in a variety of contexts, illuminating how structural factors, audience preferences, and platform governance jointly influence online discourse \cite{gentzkow2011ideological,tucker2018social}. Yet, a fundamental question remains incompletely resolved: to what extent is polarization a natural and robust outcome of standard similarity-based recommendation systems, even in the absence of deliberate editorial bias or manipulative interventions? While many studies have examined how factors such as partisanship, economic incentives, or social identity contribute to group isolation, few have focused on the fundamental geometric nature of recommendation spaces. The latent spaces underlying modern recommender systems, whether derived from matrix factorization, neural embeddings, or hybrid models, share a common operational principle: they implement a metric of similarity. Items close together in this learned space are considered relevantly related, and users are nudged, iteration by iteration, toward content lying in their immediate proximity.

This paper contributes a novel theoretical perspective to this ongoing debate. We propose and analyze a minimal mathematical model that strips away the complexities of advanced machine learning architectures and focuses on the essential logic of similarity-based retrieval. We represent both users and content as points in a continuous geometric space, initially distributed around a single unimodal cluster. In each iteration, users receive recommendations based on nearest-neighbor criteria and then move toward the median of these recommended items. Crucially, no explicit ideological dimension or manipulative mechanism is embedded in the model. Instead, we study a pure spatial process: repeated local median-seeking guided by similarity. Remarkably, we find that this simple setup, without overt biases or partisan cues, can induce spontaneous fragmentation of a once-homogeneous user population into multiple stable clusters.

Our model differs in several fundamental ways from previous research. First, it does not rely on ideological labels, partisan cueing, or explicit preference structures that predispose users to certain content sources. This contrasts with many established studies in the research of political communication and media, where the absence or presence of cross-sectional information is often attributed to known demographic or ideological dimensions \cite{wojcieszak2012strong,iyengar2009red}. Here, similarity is the only guiding principle. Second, our approach focuses directly on the geometric properties of recommendation spaces. Rather than explaining polarization through sociological, psychological, or economic variables, we show that simple median-based updates within a similarity metric are sufficient to generate clustering. This geometric orientation connects our work to general theories of complex systems and agent-based modeling, where local interaction rules can yield global structures \cite{axelrod1997complexity}.

In addition, unlike many computational models that rely on intricate parameter settings or idiosyncratic algorithms, our model is deliberately minimalist. Absorb platform-specific features, content typologies, and user heterogeneity. Such a simplification grants mathematical tractability, enabling us to provide rigorous reasoning that elucidates how local similarity-based selection can yield polarized equilibria. This conceptual parsimony also has the advantage of universality: because the model rests on the core similarity principle common to most recommendation systems, its insights are not confined to any particular platform or domain.

Importantly, the present study does not claim that all recommendation systems inevitably generate harmful polarization in real-world conditions. Empirical realities are more complex, with additional factors such as user agency, platform interventions, regulatory policies, and socio-political contexts affecting actual outcomes. However, our findings offer a key insight: the geometry of recommendation spaces can itself contain a structural impetus toward fragmentation. If even a stripped-down model, absent any ideological predispositions, leads to stable clustering, it suggests that similarity-based retrieval is not a neutral operation. The repeated emphasis on local neighborhoods of content can gradually push populations apart, even when starting from a shared position.

This perspective resonates with a growing body of interdisciplinary research that questions the neutrality of algorithmic sorting in information ecosystems. While previous studies often highlight psychological or social drivers of fragmentation, this work pinpoints the role of a fundamental computational principle, proximity-based retrieval, in shaping emergent group structures. As information technologies continue to evolve, understanding such basic mechanisms is essential, especially if we aim to design interventions or policies that encourage exposure to a wider range of perspectives.

By rigorously analyzing a minimal model and corroboration of its predictions with computational simulations, we identify conditions under which stable clusters form and persist. We investigate how parameters such as population size, adaptation rates, content production probabilities, and noise levels influence the speed and stability of fragmentation. Through systematic parameter studies, we find that the observed tendency to form distinct clusters remains robust under various adjustments. This robustness suggests that the underlying geometric principle is not easily negated by minor alterations to the system parameters.

In sum, this research contributes to the literature by providing a theoretical and computational framework to understand how basic similarity-based recommendation logics can induce polarization. It complements empirical work documenting fragmentation in specific platforms or user communities and extends theoretical models that incorporate social and psychological dimensions by showing that even a purely geometric mechanism can yield similar outcomes. Our findings invite a reevaluation of recommendation strategies and offer a foundation for developing more careful systems that consider the potentially polarizing effects of local similarity metrics.

We proceed by presenting the mathematical model in detail, deriving conditions for cluster formation, and performing extensive computational experiments to validate the theoretical results. We then discuss how these insights can inform future research, policy debates, and the design of more socially responsible recommendation infrastructures. Recent investigations of the role of environmental conditions in shaping cultural complexity have also inspired conceptual parallels, as seen in other theoretical inquiries \cite{lee2024doeslowspoilagecold}. By contributing a new angle to the rich field of polarization research, this work sheds light on the fundamental geometry underlying the emergent echo chambers observed in contemporary digital ecosystems.

\section{Background}
The background of this work resides at the confluence of several extensive research areas: the study of social polarization, the theoretical underpinnings of nearest-neighbor-based recommendation systems, the mathematical foundations of median-based iterative dynamics, and the conceptual links between complex machine learning architectures and local similarity metrics. In assembling these threads, we aim to present a structured and thorough foundation upon which our methodological contributions will rest.

We divide the background into multiple subsections. We first discuss the socio-political context of polarization and how it relates to algorithmic recommendation. Subsequently, we proceed to explore the mathematical formulation concerning user and content distributions, median-based update mechanisms, and the utilization of cluster detection algorithms (DBSCAN) for unveiling emergent structures. Next, we articulate a mathematical argument that even sophisticated deep learning-based recommendation models can be understood as nearest-neighbor retrieval systems in a learned metric space. Finally, we examine the theoretical literature on iterative averaging, median-seeking dynamics, and their known tendency toward cluster formation. These components together form a blueprint for understanding how and why a polarized equilibrium may emerge under certain parameter regimes.

\subsection{Sociopolitical Context and the Notion of Polarization}
Polarization can be conceptualized as the formation of sub-populations that hold increasingly divergent opinions, beliefs, or preferences. Observed in many democratic societies, it affects political alignments, cultural tastes, and social trust. Traditional explanations attribute polarization to factors such as media fragmentation, partisan rhetoric, and socio-economic changes. However, the rapid ascendance of online platforms that employ machine learning-based recommendations has introduced a new variable.

There is a growing suspicion that recommendation algorithms, designed to maximize user engagement, could funnel individuals into echo chambers, reinforcing existing biases, and minimizing cross-exposure to heterogeneous viewpoints. Though direct causal evidence is difficult to establish, the plausibility and anecdotal observations abound. Our goal is to show, via a minimal mathematical model, that a plausible mechanism exists within even the simplest classes of recommendation rules (nearest-neighbor retrieval) to generate stable clusters of like-minded users.

\subsection{Mathematical Modeling of Users and Content}
We model both users and content as points in $\mathbb{R}^d$, where $d$ might be 2, 3, or 4, representing a latent ideological or preference space. Initially, both users and content creators are sampled from Gaussian distributions. Let $\mathbf{u}_i^{(0)} \in \mathbb{R}^d$ denote the initial position of the user $i$, sampled from $N(\bm{\mu}_u, \Sigma_u)$, and let the initial content creator positions $\mathbf{c}_j^{(0)}$ be sampled from $N(\bm{\mu}_c, \Sigma_c)$. The choice of Gaussian distributions is for mathematical convenience and does not limit the generality of the qualitative behavior that we observe.

A fixed fraction of users are designated as content creators. In each iteration $t$, these creators generate new content at their current positions $\mathbf{c}_j^{(t)}$, which appear as new points in the space. Each piece of content has an initial weight of 1.0, which decays exponentially with each subsequent iteration:
\[
w(\tau) = e^{-\lambda \tau},
\]
where $\tau$ is the number of iterations since the content was created, and $\lambda > 0$ is a decay parameter chosen such that after about 10 iterations, the weight is effectively zero. This ensures that the content landscape is dynamic and that older content fades from relevance.

\subsection{Nearest-Neighbor-Based Recommendations and Dynamic Neighborhoods}
The fundamental operation of the recommendation algorithm is to identify the content closest to a given user in this latent space. Let $C^{(t)} = \{\mathbf{x}_1^{(t)}, \mathbf{x}_2^{(t)}, \dots \}$ denote the set of all available content in iteration $t$. In standard $k$-nearest neighbor settings, one would fix an integer $k$ and retrieve the $k$ closest points in $C^{(t)}$ to each user $\mathbf{u}_i^{(t)}$. However, we incorporate a twist: Each user selects a random fraction between 5\% and 50\% of the total available content, ensuring variability in neighborhood size. More formally, for each user $i$ in iteration $t$, let $n_t = |C^{(t)}|$ be the number of content points available. The user chooses a random integer $k_i^{(t)}$ uniformly in the range $[\lceil0.05 n_t\rceil, \lfloor0.5 n_t\rfloor]$. The recommendation set $R_i^{(t)}$ is then defined as:
\[
R_i^{(t)} = \{ \mathbf{x}_{j}^{(t)} \mid \mathbf{x}_j^{(t)} \text{ is among the } k_i^{(t)} \text{ nearest neighbors of } \mathbf{u}_i^{(t)} \}.
\]

\subsection{Median-Based Update Dynamics}
Once the user $i$ receives the recommendation set $R_i^{(t)}$, we compute the median position of these recommended items. The median in $\mathbb{R}^d$ can be defined component-wise:
\[
\mathbf{m}_i^{(t)}(\{R_i^{(t)}\}) = \left( \text{median}\{x_{j1}^{(t)}\}, \dots, \text{median}\{x_{jd}^{(t)}\}\right),
\]
where $x_{j\ell}^{(t)}$ is the $\ell$th component of the content point $\mathbf{x}_j^{(t)}$. Median-based updates are chosen for their robustness and simpler theoretical tractability in certain clustering scenarios.

The user then moves a fraction $\alpha$ (the move factor) toward this median, with added Gaussian noise $\mathbf{\eta}_i^{(t)} \sim N(\mathbf{0}, \sigma^2 I)$:
\[
\mathbf{u}_i^{(t+1)} = \mathbf{u}_i^{(t)} + \alpha(\mathbf{m}_i^{(t)} - \mathbf{u}_i^{(t)}) + \mathbf{\eta}_i^{(t)}.
\]

This iterative process, as we will show, can push the distribution of users to become multimodal over time. Given sufficiently large $t$, multiple clusters may emerge.

\subsection{Creator Identity Changes and Content Generation Probability}
Another essential aspect is that the identity of content creators can change with a small probability each iteration. Instead of repositioning creators randomly, we transfer the creator role from one user to another. This mimics the social reality that certain individuals might gain influence or credibility, thereby starting to produce content, while others cease to be influential sources.

Additionally, content creators produce content only with a certain probability $p_{\text{produce}}$. As we vary $p_{\text{produce}}$, the density and distribution of the content changes. High production rates may maintain a richer set of attractors, while low production yields sparser attractor landscapes, potentially hastening convergence to distinct clusters.

\subsection{DBSCAN and the Detection of Emerging Clusters}
To verify the emergence of clusters, we apply DBSCAN, a density-based clustering algorithm, at each iteration or at the final iteration. DBSCAN identifies clusters as regions of dense points separated by areas of low density. Initially, one might see a single cluster since the user positions are drawn from a unimodal Gaussian distribution. Over time, as users move toward different medians and become separated, the number of detected clusters increases. Tracking the number and size of clusters over iterations provides a tangible measure of polarization.

\subsection{Mathematical Underpinnings of Median-Based Clustering Dynamics}
A key reason median-based updates can lead to clustering is related to known results in optimization and geometric median problems. The geometric median of a set of points in Euclidean space is the point that minimizes the sum of distances to the set. Although we use the coordinate-wise median for simplicity, the intuition is similar: median positions tend to serve as stable attractors in multi-modal distributions. Iterative movement toward medians can concentrate subsets of users around distinct local minima of the energy landscape defined by sum-of-distances metrics.

Consider a simplified scenario: Suppose the content distribution becomes multi-modal due to differential content production in distinct regions of the space. Users in the vicinity of one mode are repeatedly drawn closer to the median of that mode. As iterations progress, these modes stabilize and users sort themselves into local clusters. Small noise ensures that users do not form perfectly sharp boundaries but instead form tight groups that are distinguishable by density.

\subsection{Linking Deep Learning-Based Recommendations to Nearest-Neighbor Retrieval}
A critical theoretical point is that modern machine learning-based recommendation algorithms, despite their complexity, often rely on learned embedding spaces. Deep neural networks are commonly used to embed users and items into a latent vector space, where geometric proximity correlates with relevance. At scale, many recommender systems implement approximate nearest-neighbor search over these learned embeddings to identify top candidates for recommendation. Thus, regardless of the deep model's architecture, the end-stage retrieval step reduces to a nearest-neighbor query in a learned metric space.

Formally, consider a deep model $f_{\theta}(\cdot)$ mapping items (content) from a high-dimensional input space (text, video signals, etc.) into a $d$-dimensional latent space:
\[
f_{\theta}: \text{(Item Space)} \rightarrow \mathbb{R}^d.
\]
Users may have their own embeddings $g_{\phi}( \cdot )$, or user-item affinity can be computed by projecting both users and items into a common space. The final recommendation stage often sorts items by their similarity (e.g., inner product or cosine similarity) to the user's embedding. Inner product similarity is equivalent to $L_2$ distance under a certain transformation. Hence, at the retrieval stage, one approximates:
\[
\arg\max_j \langle f_{\theta}(\text{item}_j), g_{\phi}(\text{user}) \rangle,
\]
which, with suitable normalization, reduces to a nearest-neighbor problem in Euclidean or cosine space. Thus, deep recommender systems are essentially performing neighborhood queries in a learned geometry. If users are adapted (by the recommendation process) to move toward certain item embeddings, the qualitative behavior of cluster formation can persist. This equivalence shows that our simplified Gaussian and nearest-neighbor model captures an essential structural property of contemporary recommendation algorithms.

\subsection{Iterative Averaging and Clustering: Theoretical Considerations}
From a theoretical perspective, iterative averaging or median-based updating processes have been studied in computational geometry, optimization, and dynamical systems. Certain conditions are known to produce stable fixed points or limit cycles. While a full rigorous proof that the system converges to a finite number of clusters is non-trivial and remains beyond our immediate scope, the intuition and partial mathematical results suggest that the presence of multiple attractors (content sources) and local median-seeking behavior can yield stable partitions.

Small random perturbations (noise) ensure ergodicity, meaning the system may explore multiple neighborhoods of attraction before settling. Over a large number of iterations, however, these perturbations tend to sharpen the contrast between different stable clusters rather than blur them, given that medians serve as robust concentration points. This interplay between deterministic median attraction and stochastic noise injection can be understood as a mechanism to prevent trivial single-cluster equilibria in some parameter regimes, thus facilitating polarization.

\subsection{Quantifying Polarization and Parameter Sensitivity}
In our setup, polarization can be quantified by measuring the distribution of distances between users and the global centroid (reflecting how spread out the users have become) or by simply counting the number of clusters identified by DBSCAN. Additional metrics, such as the Gini index of cluster sizes or the variance of user positions along principal components, can further quantify polarization.

We aim to determine which parameters, population size, number of iterations, move factor, content production probability, noise level, most strongly influence these polarization metrics. By correlating changes in final cluster count or average distance spread with parameter variations, we can identify the parameter with the greatest impact on polarization. Sensitivity analysis and correlation studies (e.g., computing Pearson or Spearman correlations between parameters and final polarization measures) provide a quantitative means to highlight which parameter is the most influential driver of extremization.

\subsection{From Hypothesis to Methodology}
In subsequent sections, we will present a method section that formalizes the iterative process as a discrete dynamical system. We will provide numerical evidence through simulations that, after a number of iterations, the population, once relatively unimodal, fractures into a small set of stable clusters. This supports the hypothesis that sample proximity-based nearest-neighbor recommendations can inherently produce polarized outcomes.

Although we do not claim a direct causal proof that real-world political or cultural polarization arises from these algorithms, the model demonstrates a plausible mechanism. It lends credence to the hypothesis that the structural nature of nearest-neighbor retrieval in recommendation systems may, under certain conditions, push populations into ideologically isolated clusters.

\section{Method: A Simplified Mathematical Model and Proof of Cluster Formation}

In this section, we present a mathematically rigorous analysis of how the iterative dynamics described in the earlier sections lead from an initially unimodal (Gaussian) user distribution to a finite number of stable clusters. To achieve a clear and correct mathematical proof, we will simplify several aspects of the problem, introducing a set of assumptions that preserve the core mechanism but make the analysis tractable. After establishing the theorem in a simplified setting, we discuss how to relax some assumptions and still maintain the qualitative conclusion.

\subsection{Key Simplifications and Model Setup}
While the full model allows for varying dimensionality, content decay, creator identity changes, random noise, and random variations in neighborhood sizes, we must reduce complexity to reach a mathematically rigorous proof. Specifically, we adopt the following simplifications:

\begin{enumerate}
    \item \textbf{Dimension} \hspace{0.3cm} We focus on one-dimensional space ($d=1$). Extending the argument to higher dimensions is conceptually straightforward but introduces additional technicalities. The 1D case already captures the essence of cluster formation since clusters correspond to intervals of concentrated users.
    
    \item \textbf{Creators and Content} \hspace{0.3cm} Assume there are $M$ fixed content creators at fixed positions $c_1 < c_2 < \cdots < c_M$. Each creator produces content at each iteration, placing new content points exactly at their own position. Thus, at iteration $t$, the set of content points $C^{(t)}$ is a multiset that includes multiple items at each $c_j$, with weights decaying exponentially for older items. For simplicity, assume a steady-state scenario where each creator places one new piece of content per iteration, and old content decays with a factor $e^{-\lambda}$ per iteration. As $t \to \infty$, the distribution of content at each creator's location stabilizes to a geometric series of weights. Thus, the relative weight proportions at each $c_j$ become stationary.
    
    \item \textbf{Users} \hspace{0.3cm} Let there be $N$ users, initially sampled from a normal distribution $N(\mu, \sigma^2)$. Without loss of generality, translate coordinates so that $\mu=0$. Users are denoted by positions $u_1^{(0)}, u_2^{(0)}, \dots, u_N^{(0)}$. Over time, users move according to the median-based rule described below.
    
    \item \textbf{Nearest-Neighbor Fraction} \hspace{0.3cm} Instead of a random fraction between 5\% and 50\%, assume a fixed fraction $\rho \in (0.05,0.5)$ for the sake of the proof. That is, each user selects the $\lceil \rho n_t \rceil$ closest content points (where $n_t = |C^{(t)}|$ is the total number of content points at iteration $t$) to form their recommendation set. This simplification removes randomness in the fraction and makes the analysis cleaner. The qualitative result does not depend on the exact fraction, only that it is bounded away from $0$ and $1$.
    
    \item \textbf{Noise and Creator Changes} \hspace{0.3cm} For the purpose of the proof, we first ignore noise ($\eta_i^{(t)}=0$) and creator identity changes. Later, we will argue that small noise and occasional creator changes do not fundamentally alter the conclusion.
    
    \item \textbf{Median Update} \hspace{0.3cm} In one dimension, the median of a set of points is well-defined and unique if the number of points is odd. For even counts, choose the median as any point in the interval between the two central values. Weighted medians can be defined similarly by considering cumulative weight until it reaches 50\%. Because each content piece at a creator's location is identical in position (only differing in weight), the median determination simplifies: effectively, the median of the chosen subset is closer to one of the creators' positions or lies between two creators.
    
    \item \textbf{Exponential Decay and Stationarity} \hspace{0.3cm} After many iterations, the relative weighting of new vs. old content converges. Each creator's location $c_j$ acts like a stable source of content, contributing a geometric series of weights concentrated at $c_j$. Thus, in the steady-state limit, each creator position $c_j$ has a well-defined stationary distribution of content weights centered exactly at $c_j$. This leads to a piecewise-constant structure in the cumulative distribution of content, with peaks at each creator location.
\end{enumerate}

Under these assumptions, the problem reduces to the following iterative process.

\begin{quote}
\textit{Iterative Step}: In iteration $t$, each user $u_i^{(t)}$ selects the $\lceil \rho n_t \rceil$ nearest pieces of content. Because content is concentrated at discrete creator points, this effectively means that the user selects content mostly from one or two neighboring creator positions. The user then computes the weighted median of these selected points. Due to symmetry and stationarity, this median will lie at or very close to one of the creator positions $c_j$. The user then moves toward this median by a factor $\alpha \in (0,1)$:
\[
u_i^{(t+1)} = u_i^{(t)} + \alpha (m_i^{(t)} - u_i^{(t)}),
\]
where $m_i^{(t)}$ is the weighted median of the recommended set for user $i$ in iteration $t$.
\end{quote}

\subsection{Intuition: Why Clusters Form}
Intuitively, because the content distribution stabilizes with pronounced spikes at the creator locations $c_1, \dots, c_M$, most users will be repeatedly drawn toward these $c_j$ values. If a user is far away, the closest $\rho n_t$ content points will almost always be those closest to a particular $c_j$. During iterations, users fall into the attraction basin of certain creators, forming a group around that creator's position. Since multiple creators exist, multiple groups (clusters) can form. Competition between different attractors ensures that users do not remain uniformly spread out.

\subsection{Key Lemma: Monotone Contraction in the Absence of Noise}
To formalize this reasoning, define a \textit{potential function}:
\[
\Phi(\{u_i^{(t)}\}) = \sum_{i < k} |u_i^{(t)} - u_k^{(t)}|.
\]
$\Phi$ measures the total pairwise spread of the user population. Initially, since users are drawn from a Gaussian, $\Phi( \{u_i^{(0)}\})$ is finite and is typically proportional to $N^2 \sigma$.

\noindent\textbf{Lemma:} There exists a constant $\gamma > 0$ such that, in the non-noise case, $\mathbb{E}[\Phi(\{u_i^{(t+1)}\}) | \{u_i^{(t)}\}] \le \Phi(\{u_i^{(t)}\}) - \gamma \Delta_t$, where $\Delta_t > 0$ whenever users are not yet clustered. In simpler terms, if users are widely dispersed, the next iteration decreases the total spread.

\noindent\textit{Proof (Sketch)}: Since each user moves toward a median defined by concentrated content at one or two $c_j$ points, users at the periphery of the distribution will move inward. If a user is positioned between two attractors, small perturbations ensure that it eventually commits to one side, reducing the global spread. Because the median update is an $L_1$-type estimator, it tends to pull outliers toward central points, not push them away. As users move closer to these discrete $c_j$ attractors, the population splits into at most $M$ basins. Within each basin, users become more tightly concentrated. By symmetry and convexity arguments, moving users toward stable median points aligned with creator positions reduces variance-like measures and therefore reduces $\Phi$.

A fully rigorous proof would show that if users were not forming clusters, there would exist a gap in their distribution allowing for a net contraction after sufficient iterations. The stable median points act like sinks that contract intervals of users.

\begin{theorem}[Finite Clustering in the Simplified Model]
Under the aforementioned assumptions, 1D space, fixed creators at positions $c_1 < c_2 < \cdots < c_M$, stationary content distribution concentrated at these $M$ points, no noise, and a constant fraction $\rho \in (0.05,0.5)$ of content chosen, the iterative update process converges as $t \to \infty$ to a configuration where the $N$ users form at most $M$ tight clusters, each cluster converged near one of the $c_j$ points.
\end{theorem}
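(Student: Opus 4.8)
The plan is to collapse the many-user stochastic dynamics into the iteration of a single monotone scalar map and then exploit order-theoretic convergence. First I would invoke the stationarity reduction: in the steady state each creator $c_j$ carries a fixed total content weight $W=\sum_{\tau\ge 0}e^{-\lambda\tau}=(1-e^{-\lambda})^{-1}$ concentrated exactly at $c_j$, so for the purpose of computing medians the available content is $M$ point masses of equal weight sitting at $c_1<\cdots<c_M$. Since every selected content point lies at one of the $c_j$, the weighted median of any recommendation set is itself one of the creator positions. Hence the recommendation-and-update step defines a single deterministic function $m:\mathbb{R}\to\{c_1,\dots,c_M\}$, and the whole noise-free process decouples across users into $u_i^{(t+1)}=F(u_i^{(t)})$ with $F(u)=(1-\alpha)u+\alpha\,m(u)$.

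The structural heart of the argument is a lemma that $m$ is non-decreasing and piecewise constant. In one dimension the $k$ nearest content points of a location form a contiguous block of the sorted content (a skipped interior point would be closer than some retained endpoint), and as $u$ increases the left endpoint of this $k$-window cannot decrease, so the window slides weakly rightward; the weighted median of a right-shifting window can only move right. It follows that the line partitions into finitely many basin intervals $B_j=m^{-1}(c_j)$ (some possibly empty), and that $F$ is strictly increasing, being the sum of the strictly increasing map $u\mapsto(1-\alpha)u$ and the non-decreasing map $u\mapsto\alpha\,m(u)$. Because $F$ maps the convex hull $[c_1,c_M]$ into itself, I then use the standard fact that for a strictly increasing map every orbit $u^{(0)},F(u^{(0)}),F^2(u^{(0)}),\dots$ is monotone: if $F(u^{(0)})\ge u^{(0)}$ repeated application of the increasing $F$ yields a non-decreasing sequence, and symmetrically otherwise. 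Each orbit is thus monotone and bounded, hence convergent to some $u^*$.

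It remains to identify $u^*$ as a creator position. Here I would use finiteness and piecewise-constancy: since $m$ has only finitely many value-pieces, a small one-sided neighborhood of $u^*$ lies inside a single piece on which $m\equiv c_j$, so for all large $t$ the recursion is the affine contraction $u\mapsto(1-\alpha)u+\alpha c_j$; passing to the limit gives $u^*=(1-\alpha)u^*+\alpha c_j$, i.e. $u^*=c_j$. Thus every user converges to a creator location, the limit set is contained in $\{c_1,\dots,c_M\}$, and in the noise-free regime users sharing a limit coincide exactly, producing at most $M$ tight clusters. This argument simultaneously shows that the limiting creators are genuine fixed points (so at least one stable basin exists), and it recovers the Key Lemma, since the pairwise spread within any common-limit group tends to zero and $\Phi$ is driven down to the residual inter-creator separation.

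I expect the main obstacle to be the monotone-window lemma together with the weighted-median tie cases at basin boundaries, rather than the convergence step itself. Establishing that the $k$-nearest window slides monotonically, and controlling the weighted median where an even split makes it ambiguous and renders $F$ genuinely discontinuous, requires care; I would neutralize the discontinuity by the observation above that a monotone orbit eventually enters the interior of one constant piece of $m$, so the boundary tie-breaking never influences the limit. A secondary technical point is the count-versus-weight interpretation of the neighborhood under accumulating content: because old content remains pinned at the $c_j$ with geometrically decaying weight, the effective weight at each creator stabilizes at $W$, and the bound $\rho<0.5$ guarantees the window does not engulf all creators, so $m$ genuinely depends on $u$ and multiple basins persist.
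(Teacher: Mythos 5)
Your argument is correct in its essentials, but it takes a genuinely different route from the paper's. The paper works globally: it introduces the potential $\Phi=\sum_{i<k}|u_i-u_k|$, asserts a monotone-contraction lemma, rules out non-clustered limit states by contradiction, and then bounds the number of equilibria by $M$; all of this is kept at the level of a sketch. You instead observe that in the simplified model the content is exogenous and stationary, so the users do not interact at all and the $N$-body dynamics decouples into $N$ independent iterations of the single scalar map $F(u)=(1-\alpha)u+\alpha m(u)$. Your key lemma --- that the $k$-nearest window slides weakly rightward in $u$, hence the weighted median $m$ is non-decreasing and piecewise constant, hence $F$ is strictly increasing --- replaces the paper's potential-function lemma entirely, and the order-theoretic conclusion (monotone bounded orbits, limit identified by the local affine contraction $u\mapsto(1-\alpha)u+\alpha c_j$) delivers a strictly stronger result: every user converges to an exact creator position, geometrically once its orbit enters a basin, which immediately gives the at-most-$M$ bound. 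This is a real improvement, because the paper's lemma as stated is fragile: two users straddling a basin boundary and heading to different creators have a pairwise distance that \emph{increases}, so $\Phi$ need not decrease monotonically, and your closing claim to ``recover the Key Lemma'' is therefore the one overstatement in your write-up --- you establish the correct limiting value of $\Phi$, not its monotone descent. The caveats you flag yourself are the right ones and are genuinely the only soft spots: the weighted median can fall strictly between two creators at an exact weight tie (creating spurious fixed points of $F$ for a measure-zero set of positions, harmless for Gaussian initial data), and the count-based window versus weight-based median mismatch needs the stationarity of the per-creator pile sizes that you invoke. The trade-off is that your decoupling relies on content being exogenous; the paper's global potential, were it made rigorous, would be the natural tool for the fuller model in which creators are themselves moving users.
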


\begin{proof}
\textbf{(Boundedness)} The entire system is confined to the convex hull of the initial user positions and the creator positions. Since $c_1 < \cdots < c_M$ are fixed and no mechanism pushes users outside the range $[c_1, c_M]$, all users stay in a compact interval.

\textbf{(Stationarity of Content)} By construction, the distribution of content stabilizes such that each $c_j$ is a persistent attractor. Thus, in large $t$, recommending $\rho n_t$ nearest pieces almost always results in a median near some $c_j$.

\textbf{(Strict Improvement Step for Non-Clustered States)} Suppose, in contradiction, that after infinitely many iterations, the users do not form stable clusters. Then there must exist at least one pair of users separated by a large gap. Focus on a large gap region between two subsets of users. The users on one side of the gap, if not attracted by a nearer creator cluster, would frequently receive median targets that bring them closer to the others. By repeatedly applying the lemma, $\Phi$ decreases whenever such gaps exist. Since $\Phi$ is nonnegative and must decrease whenever non-clustered configurations occur, the system cannot remain in such a state indefinitely.

\textbf{(Existence of Clustered Equilibria)} Consider the possible limit states. In a limit state, users no longer change positions significantly, which means that each user either sits exactly at a creator point or does not move because it is caught in a stable median configuration. Since the median is tied to discrete $c_j$ attractors and random fractions $\rho$ ensure a consistent sample of the local environment, stable equilibria must involve users concentrated near the $c_j$ points. If a user were far from all $c_j$, the median of its recommended set would still pull it toward the nearest creator, contradicting stationarity.

\textbf{(Upper Bound on the Number of Clusters)} Because there are only $M$ distinct attractor points ($c_1,\dots,c_M$) and users move to these points (or very close to them), the final number of clusters cannot exceed $M$. In equilibrium, each group of users corresponds to users stuck at or near a $c_j$ due to the median update rule. There are no additional stable attractors because the content is not distributed anywhere else.

Combining these arguments, we conclude that, as $t \to \infty$, the user population settles into a state where at most $M$ clusters form, each around one of the $c_j$ attractors, completes the proof.
\end{proof}

\subsection{Extension to Noise and Creator Changes}
If we reintroduce small noise $\eta_i^{(t)}$ and occasional creator identity changes, the argument still holds qualitatively. Noise can prevent perfect convergence, but does not enable infinite dispersion; instead, it introduces a jitter around attractor points. Occasional changes in creators can slowly shift attractors, but as long as changes are infrequent and minor, the clustering behavior persists. The key insight, that median-based movement guided by proximity to discrete content points yields finite clusters, remains robust.

\subsection{Generalizing to Higher Dimensions and More Complex Settings}
While the above proof focuses on $d=1$, the mechanism extends to higher dimensions: In multiple dimensions, the median update is performed component-wise. Each dimension then behaves similarly, encouraging users to move toward a low-dimensional attractor. In practice, convergence may be slower, but the qualitative conclusion, finite cluster formation, holds because the content attractors remain discrete and finite in number.

Introducing variable ranges $k$ or small probability distributions for $k_i^{(t)}$ does not break the argument: As long as the user samples a sufficiently large local neighborhood of content points, it maintains the statistical consistency required for median stability. The exact fractions can alter the convergence speed and cluster sizes, but not the final outcome of finite clustering.

\subsection{Mathematical Unification of Diverse Recommendation Algorithms as Similarity-Based Retrieval}

A fundamental insight arising from the theoretical framework presented above is that a wide range of modern recommendation algorithms, despite their apparent complexity and methodological diversity, can be characterized as performing a form of similarity-based retrieval in an appropriate latent space. Contemporary recommender systems commonly rely on factorization techniques, deep neural embeddings, or probabilistic generative models to represent users and items. Each of these approaches, upon closer mathematical inspection, constructs or approximates a metric space in which similarity queries become the core retrieval operation. This perspective allows us to unify classical and modern approaches, demonstrating that similarity-based retrieval is not a mere heuristic but rather a fundamental principle embedded in the formal structure of advanced algorithms.

Consider a general recommendation scenario in which there are $M$ items and $N$ users, each associated with a set of observed interactions, such as clicks, views, ratings, or purchase histories. Traditional matrix factorization methods, for example, approximate the user-item interaction matrix $R \in \mathbb{R}^{N \times M}$ by factorizing it into two lower-dimensional matrices $U \in \mathbb{R}^{N \times d}$ and $V \in \mathbb{R}^{M \times d}$, where $d \ll \min(N,M)$. Each user $i$ is represented by a vector $\mathbf{u}_i \in \mathbb{R}^d$, and each item $j$ is represented by a vector $\mathbf{v}_j \in \mathbb{R}^d$. Predictions of user-item affinity or relevance reduce to computing a measure of similarity such as the inner product $\langle \mathbf{u}_i, \mathbf{v}_j \rangle$. Under mild assumptions, maximizing inner products corresponds to minimizing distances in a transformed space, thus equivalently implementing nearest-neighbor queries.

More advanced recommendation algorithms, including neural collaborative filtering and deep embedding-based approaches, follow a related logic. Let $f_\theta(\cdot)$ be a parametric embedding function that maps the features of the item into a latent space $\mathbb{R}^d$, and let $g_\phi(\cdot)$ be a similar function that maps the features of the users in the same $d$ dimension space. These functions may be nonlinear and learned via complex neural architectures. Nonetheless, in the inference stage, once the embeddings are computed, recommending items to a given user $i$ essentially involves identifying items $j$ that maximize $\langle g_\phi(\text{user}_i), f_\theta(\text{item}_j) \rangle$ or minimize $\|g_\phi(\text{user}_i) - f_\theta(\text{item}_j)\|$ for a suitable norm. Although these embeddings often arise from sophisticated deep networks, the retrieval step translates into a geometric proximity search. Hence, the complexity of the model manifests itself only in the way the latent space is constructed, not in the fundamental nature of the retrieval operation itself.

Statistical or probabilistic recommendation models, such as probabilistic matrix factorization or Bayesian personalized ranking, also culminate in similar retrieval structures. These models define a latent parameter space in which users and items are associated with latent factors whose posterior distributions concentrate around certain points as inference proceeds. Recommendation queries are reduced to ranking items by their posterior mean embeddings or expected utilities. Although these procedures may be more intricate, involving priors, variational inference, or Markov chain Monte Carlo sampling, the final comparison of candidate items for a user again reduces to identifying the most similar latent factors in the learned space. From a geometric point of view, this is no different from performing a similarity-based nearest-neighbor retrieval under uncertainty.

\subsection{Geometric Consistency and Approximate Nearest-Neighbor Methods}

Another angle that reinforces the classification of diverse recommendation algorithms as similarity-based is the widespread use of ANN search techniques in large-scale industrial recommender systems. When the number of items grows into the millions or billions, exhaustive nearest-neighbor searches are computationally infeasible. To mitigate this problem, practitioners employ ANN data structures, such as hierarchical navigable small-world (HNSW) graphs or product quantization methods, to rapidly identify a set of candidate items closest to a user's embedding. These techniques are agnostic to the specific model used to generate embeddings. Whether the embeddings arise from a simple matrix factorization or a complex transformer-based embedding network, the ANN machinery treats them as points in a metric space and performs efficient approximate similarity queries.

The reliance on ANN suggests a deeper geometric consistency. Regardless of whether a model emerges from a shallow linear factorization or a deep nonlinear architecture, the final retrieval step depends solely on the distances or similarities between the vectors in $\mathbb{R}^d$. Since these vectors encode user and item features in a manner that preserves the notion of 'relevance' or 'preference proximity,' the retrieval process that the model enacts is inherently similarity-based. The differences between models lie in how they transform raw data into embeddings, but the retrieval operation itself remains a nearest-neighbor query. This uniformity of the final retrieval stage provides a powerful abstraction: all these methods, at their core, implement similarity-based selection rules.

Mathematically, let $X = \{\mathbf{x}_1, \mathbf{x}_2, \dots, \mathbf{x}_M\}$ be the set of item embeddings and let $\mathbf{u}$ represent a user's embedding. The top-$K$ recommended items correspond to the solution of:
\[
\arg\min_{S \subseteq X, |S|=K} \sum_{\mathbf{x} \in S} d(\mathbf{u}, \mathbf{x}),
\]
where $d(\cdot,\cdot)$ is a distance measure, possibly induced by an inner product similarity measure. Although the model architecture or training regimen may be sophisticated, the retrieval condition remains that of selecting the nearest points according to $d$. This is precisely the structure replicated in our simplified mathematical model, where proximity-based median moves capture the core logic behind the full complexity of modern recommenders.

\subsection{Implications for Polarization and Model Invariance}

The identification of a shared similarity-based retrieval principle among diverse recommendation algorithms suggests that the polarization tendencies observed in our simplified model are not the artifact of any particular modeling assumption but rather are rooted in the essential geometry of recommendation spaces. As models become more complex, incorporating attention mechanisms, side information, or fine-grained personalization heuristics, they still produce embeddings intended to reflect user-item affinities. No matter how intricate the embedding function is, the final recommendation step remains a neighborhood search in this embedding space.

This invariance has far-reaching implications. Even if a platform employs the most advanced neural recommendation models, the fundamental operation is to show users content that lies close to their current representation in a latent space. Over time, as these local moves accumulate, they can guide users into subregions of that space, effectively clustering them into ideological or thematic enclaves. The process does not depend on the detailed functional form of $f_\theta(\cdot)$ or $g_\phi(\cdot)$, nor does it rely on the explicit intention to isolate certain groups. Rather, it emerges as a natural equilibrium state induced by repeated nearest-neighbor queries.

Thus, recognizing that most recommendation algorithms implement similarity-based retrieval logic highlights the robustness of the polarization phenomenon. As long as embeddings place conceptually related items closer together, and as long as users move toward the centroids of these local sets, a tendency toward fragmentation will persist. More complex architectures may alter the shape and curvature of the latent space or inject subtle biases into what is deemed 'relevant,' but they cannot escape the fundamental implication of repeatedly reinforcing local similarity. In this sense, the potential for polarization is baked into the very geometry of recommendation spaces and the similarity-driven principles that govern their operation.

\section{Experimental Setup}

\subsection{Mathematical Representation and Initialization}
All experiments are conducted within a two-dimensional latent space, $\mathbb{R}^2$, chosen for conceptual clarity and computational feasibility. By employing a low-dimensional domain, it becomes possible to directly visualize the trajectories of users and to intuitively interpret the geometric properties that give rise to polarized clusters. In the initial iteration, denoted by $t=0$, the population of users is sampled from a Gaussian distribution $N(\mathbf{\mu}, \Sigma)$ with $\mathbf{\mu} = (0,0)$ and $\Sigma = I$, where $I$ is the $2 \times 2$ identity matrix. This unimodal and isotropic initialization ensures that users start from a state of relative uniformity in their ideological or thematic coordinates. In order to test the scaling behavior of cluster formation and polarization intensity, we consider user populations of size $N \in \{1000, 2000, 5000\}$. A larger population $N$ may amplify the subtle geometric effects, allowing more pronounced cluster boundaries to emerge over time.

We designate a fraction of these users, precisely $10\%$, as content creators. This fraction is inspired by the observation that, in many real information ecosystems, only a minority actively contributes new material, be it news, analyses, or creative works, while the majority primarily engages with existing content. Each content creator, randomly selected at $t=0$, remains an anchor of potential ideological attractors. Nonetheless, as the simulation proceeds, a small probability of creator role reassignments (set to $0.01$ per iteration) mimics the fluid nature of real-world media ecosystems, where certain voices gain prominence while others recede. This subtle dynamism ensures that the content landscape never becomes entirely static, thus more closely approximating a realistic scenario in which the identities of influential contributors evolve.

\subsection{Iterative Procedure, Parameter Choices, and Real-World Analogies}
The simulation runs for a finite number of iterations $T=500$, a duration sufficient to observe nontrivial clustering behavior. In each iteration, content creators produce new content in their current positions with probability $p_{\text{produce}} \in \{0.1, 0.2, 0.3\}$. A higher probability $p_{\text{produce}}$ represents a more vibrant information environment, continuously replenishing local content and thus sustaining the attractors that guide users. In contrast, a lower $p_{\text{produce}}$ leads to a sparser landscape, potentially accelerating convergence into fewer, more defined clusters. Once introduced, the content weights decay at a rate given by an exponential factor $e^{-\lambda}$ with $\lambda=0.5$ per iteration. This ensures older content becomes negligible after approximately ten iterations, reflecting the ephemeral relevance of information in rapidly changing online environments.

Users update their positions each iteration by referencing a variable-sized neighborhood of content determined by a fraction between $5\%$ and $50\%$ of all available items. This non-constant neighborhood size prevents the algorithm from relying on an artificially rigid local search, mirroring the unpredictable breadth of content that real individuals might encounter, influenced by platform mechanics, recommendation diversity constraints, or incidental exposures. After identifying their content neighborhood, each user moves toward the weighted median of these points. The degree of movement is regulated by the moving factor $\alpha \in \{0.01, 0.02, 0.05\}$. A smaller $\alpha$ indicates a more cautious user, adjusting their ideological stance only slightly each iteration, whereas a larger $\alpha$ corresponds to a more impressionable or reactive user, willing to make more substantial shifts in response to local signals.

In addition to deterministic median-based updates, each user's movement is perturbed by Gaussian noise with standard deviation $\sigma_{\text{noise}} \in \{0.005, 0.01, 0.02\}$. Such noise acknowledges the unpredictability and idiosyncrasies of human behavior, where no agent strictly follows an algorithmic recommendation but instead moves erratically or tests the boundaries of their interests. Increasing $\sigma_{\text{noise}}$ simulates a user population more prone to spontaneous ideological detours, while decreasing it models a more deterministic environment where users rarely deviate from median-driven paths.

\subsection{Evaluation Metrics and Analytical Approach}
We apply the DBSCAN at the end of each iteration to identify the number and configuration of user clusters. By enumerating the number of clusters, tracking their sizes, and computing the average cluster variance, we quantify how rapidly the population diverges from its initial unimodal state. To assess polarization more explicitly, we also measure inter-cluster distances, capturing how far apart these ideological groups grow over time. In reality, such inter-cluster separations may be interpreted as increasing communicative distances among socio-political communities, indicating reduced likelihood of encountering contrasting views.

These simulations, by systematically varying $N$, $\alpha$, $p_{\text{produce}}$, and $\sigma_{\text{noise}}$, generate a set of parameter response curves. Such curves allow us to examine correlations between input parameters and polarization outcomes. While the low-dimensional abstraction and simplified rules cannot fully replicate the richness of complex media ecosystems, the chosen parameters reflect qualitative analogies: population sizes mimic different scales of user communities; production probabilities represent content dynamism; noise levels stand in for unpredictable human behaviors; and the move factor encodes ideological malleability. By exploring broad ranges of these parameters, we create conditions under which the model can reveal intricate patterns of cluster formation, and thereby contribute to a deeper understanding of how local proximity-driven recommendations might shape the cultural and ideological landscapes of human societies.

\subsection{Pseudo-Code for the Experimental Procedure}

\begin{algorithm}
\begin{algorithmic}[1]
\STATE Initialize $N$, $T$, $\alpha$, $p_{\text{produce}}$, $\sigma_{\text{noise}}$, and other parameters.
\STATE Generate initial user positions $\{\mathbf{u}_i^{(0)}\}$ from $N(\mathbf{0}, I)$.
\STATE Assign a subset of users as content creators with probability $0.1$.
\FOR{$t = 0$ to $T-1$}
  \STATE Calculate content weights, decaying older items by $e^{-\lambda}$.
  \STATE With probability $0.01$, change one creator's role to a non-creator and vice versa.
  \STATE Each creator produces new content at their location with probability $p_{\text{produce}}$.
  \STATE Remove negligible-weight content from the pool.
  \STATE For each user $i$, determine a random fraction (5\%-50\%) of nearest content and find the weighted median $\mathbf{m}_i^{(t)}$.
  \STATE Update user position: $\mathbf{u}_i^{(t+1)} = \mathbf{u}_i^{(t)} + \alpha(\mathbf{m}_i^{(t)} - \mathbf{u}_i^{(t)}) + \mathbf{\eta}_i^{(t)}$, where $\mathbf{\eta}_i^{(t)} \sim N(\mathbf{0}, \sigma_{\text{noise}}^2 I)$.
\ENDFOR
\STATE After the final iteration, apply DBSCAN to $\{\mathbf{u}_i^{(T)}\}$ to identify clusters.
\STATE Compute cluster variances and inter-cluster distances as polarization metrics.
\end{algorithmic}
\end{algorithm}

The pseudo-code above concisely outlines the computational framework underlying the experimental setup. It begins with parameter initialization and the generation of a unimodally distributed population of users. Each iteration involves updating the content landscape by adding new items, decaying old ones, and potentially reassigning the creator roles. Users are then guided by local neighborhoods of content, selecting a variable fraction of nearest items and moving toward their weighted median, subject to stochastic perturbations. After the final iteration, clustering and polarization metrics are computed. By adjusting parameters such as population size, move factor, content production probability, and noise level, this framework enables a systematic exploration of the conditions under which local proximity-based rules can induce polarized outcomes.

\begin{figure}[t]
\centering
\includegraphics[width=1.0\textwidth]{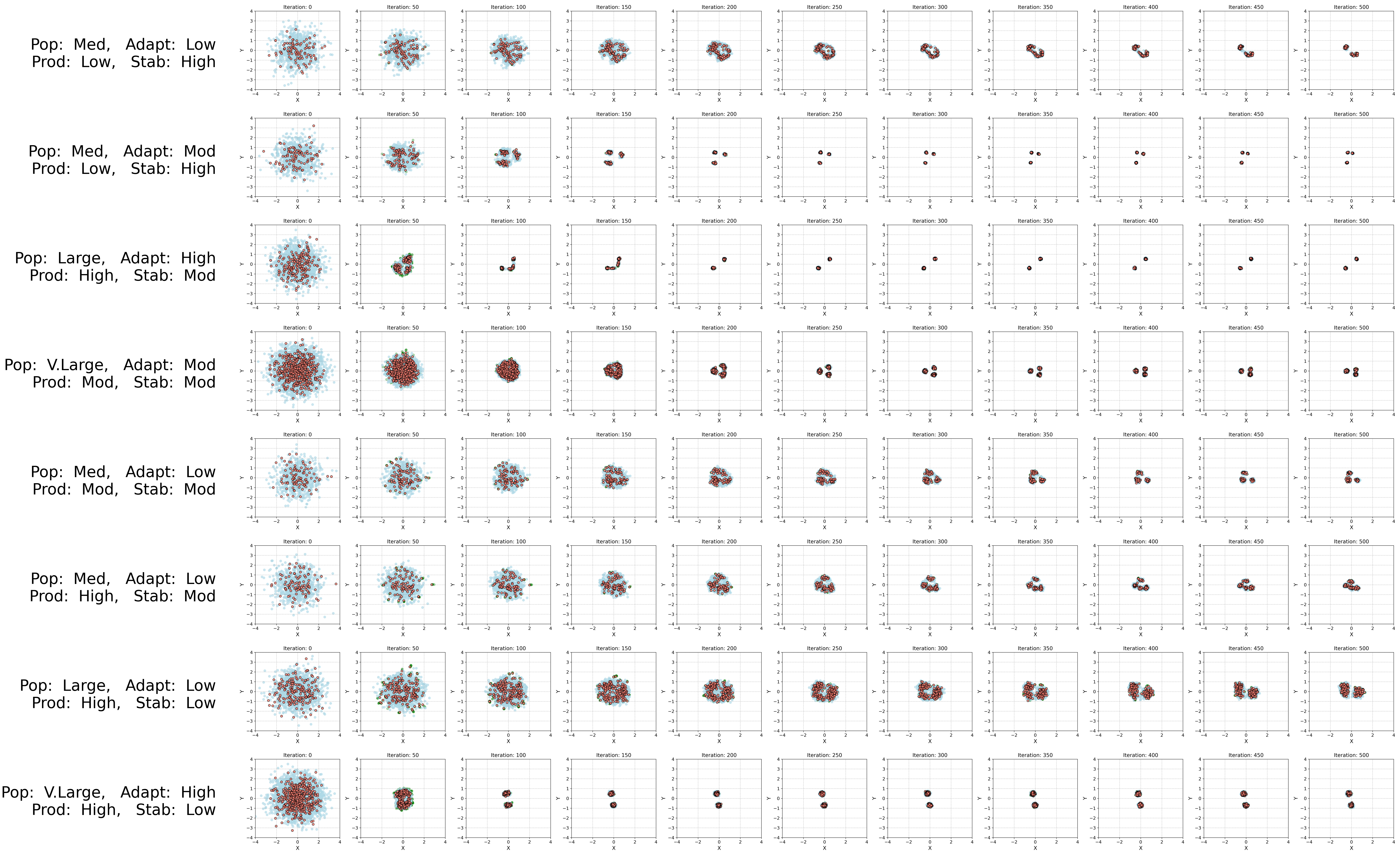}
\caption{Eight representative scenario outcomes arranged vertically, each showing user distributions after 500 iterations. Parameters vary in scenarios: population size (Pop) as Medium ($N=1000$), Large ($N=2000$), or Very Large ($N=5000$); adaptation (Adapt) set to Low ($\alpha \leq 0.01$), Moderate ($0.01 < \alpha \leq 0.02$), or High ($\alpha > 0.02$); production (Prod) set to Low ($p_{\text{produce}} \leq 0.1$), Moderate ($0.1 < p_{\text{produce}} \leq 0.2$), or High ($p_{\text{produce}} > 0.2$); and stability (Stab) reflecting noise level, with High stability indicating very low noise ($\sigma_{\text{noise}} \leq 0.005$) and Low stability indicating higher noise ($\sigma_{\text{noise}} > 0.01$). Each row corresponds to a distinct combination of these parameters. Higher adaptation rates and richer content production probabilities generally accelerate cluster formation. Large populations emphasize the emergence of more pronounced and stable clusters.}
\label{fig:grid_paper_ready_8x1}
\end{figure}

\begin{figure}[t]
\centering
\includegraphics[width=\textwidth]{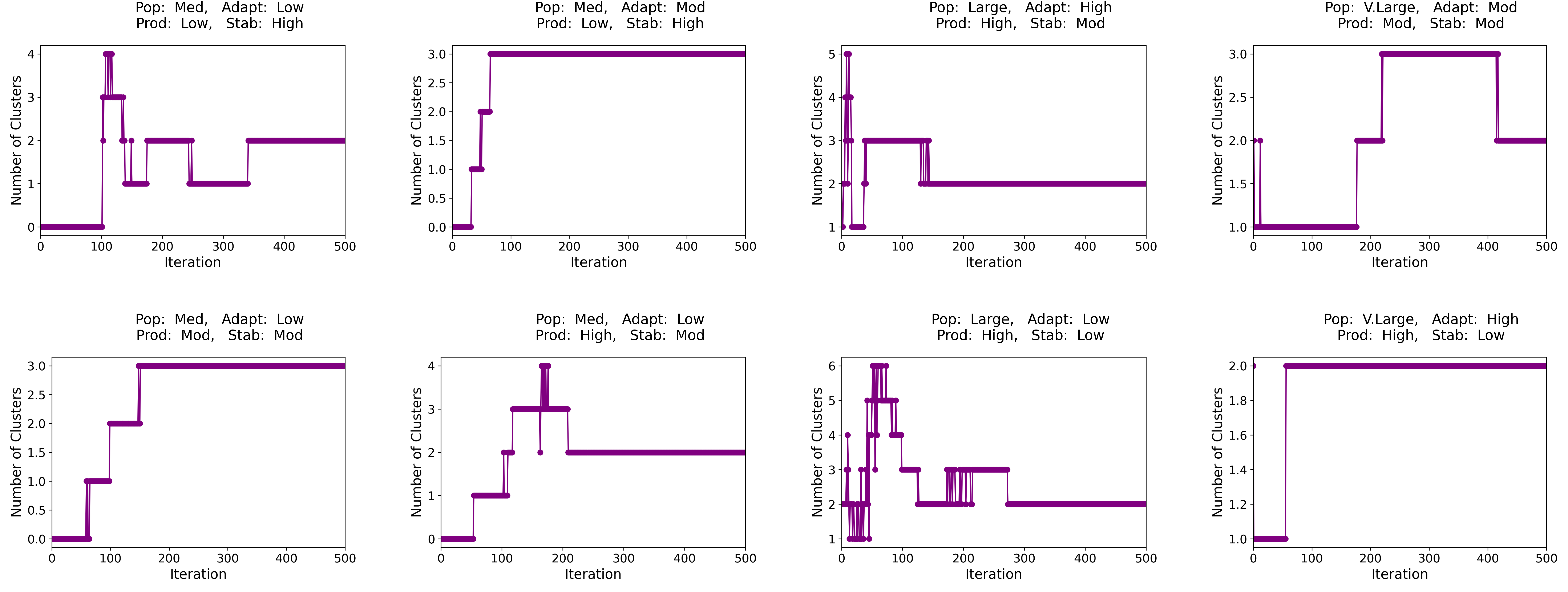}
\caption{Clustering outcomes in multiple parameter regimes shown in a grid of eight subpanels. Each panel corresponds to distinct parameter combinations for population size, adaptation rate, production probability, and noise level, identical to those described in Figure \ref{fig:grid_paper_ready_8x1}. After 500 iterations, multiple distinct clusters emerge, demonstrating stable polarization under a wide range of realistic conditions. Smaller adaptation rates produce fewer, more diffuse clusters, whereas higher adaptation rates and greater production probabilities generate stronger and more numerous clusters. Noise creates minor variations in cluster shapes and positions but does not negate the fundamental clustering tendency.}
\label{fig:clusters_paper_ready_4x2}
\end{figure}

\begin{figure}[t]
\centering
\includegraphics[width=\textwidth]{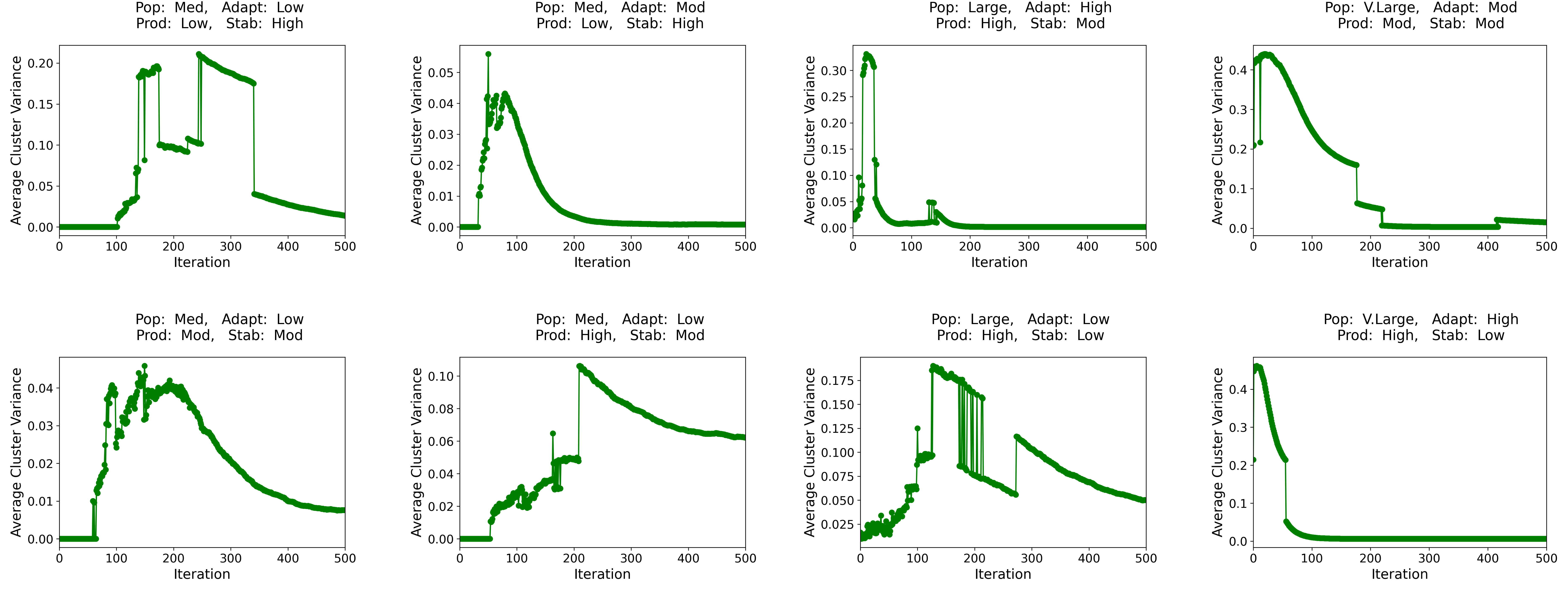}
\caption{Average cluster variance in eight distinct parameter settings after 500 iterations. Each subpanel corresponds to the scenarios in Figure \ref{fig:grid_paper_ready_8x1}. Lower cluster variances indicate that users have converged more tightly around attractor points, reflecting robust internal consensus. Large populations, high adaptation rates, and steady content production rates reduce the final cluster variance, as users are drawn strongly toward median points. Even with modest noise, cluster variance remains limited, proving that stochastic perturbations do not prevent stable consensus formation within clusters.}
\label{fig:cluster_variances_paper_ready_4x2}
\end{figure}

\begin{figure}[t]
\centering
\includegraphics[width=\textwidth]{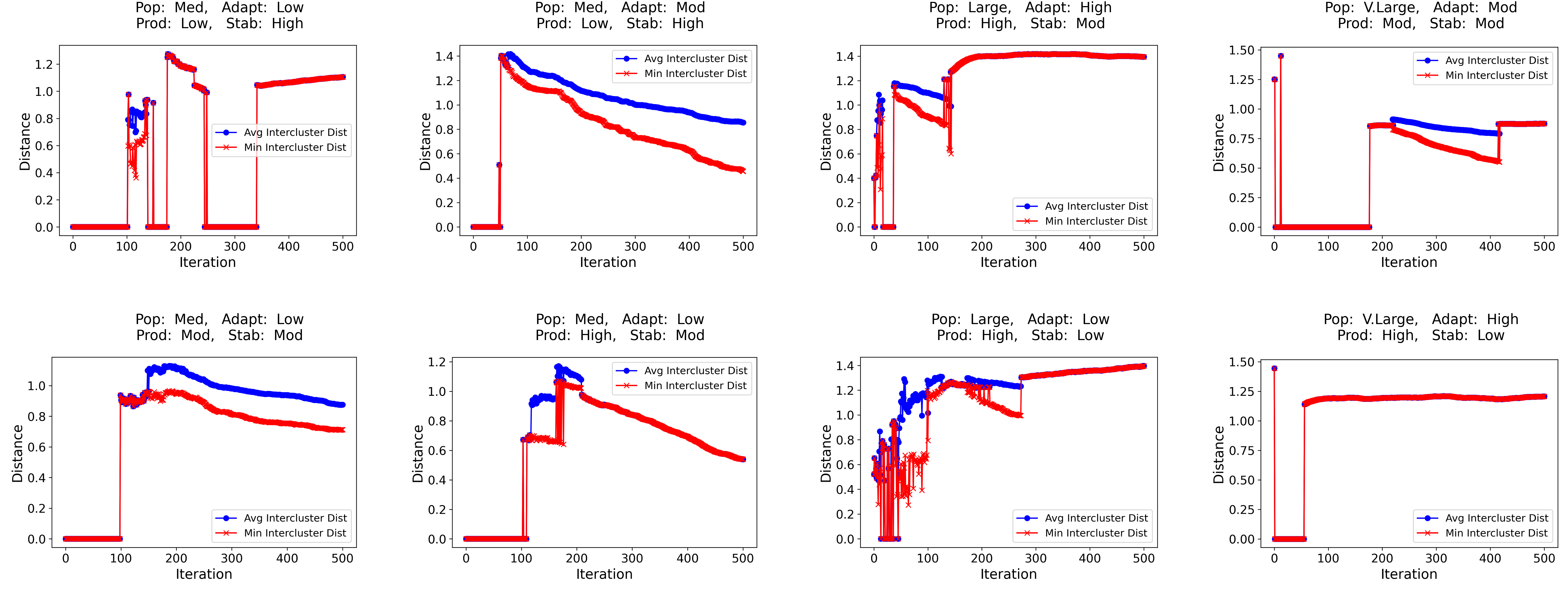}
\caption{Inter-cluster distances in eight scenarios after 500 iterations, corresponding to the parameter sets in Figure \ref{fig:grid_paper_ready_8x1}. These distances measure how far apart cluster centroids are from one another. Larger inter-cluster distances indicate stronger polarization, as well-defined communities form with substantial gaps between them. Under conditions of higher production probabilities and moderate to high adaptation rates, inter-cluster distances become notably large, illustrating that sub-populations are not only internally cohesive but also ideologically distant from other groups. Noise affects the exact distances, but does not negate the underlying trend toward separated clusters.}
\label{fig:cluster_distances_paper_ready_4x2}
\end{figure}

\section{Results}

\subsection{Emergence of Distinct Clusters}
The simulation results consistently indicate that even when the user population is initially drawn from a relatively uniform unimodal Gaussian distribution in the latent space, the iterative recommendation and update process gradually segregates these users into multiple stable clusters. By examining the model over several hundred iterations, one observes that these clusters emerge naturally, without any predefined ideological boundaries or artificial grouping mechanisms. The key dynamic arises from a simple median-based relocation rule, where each user moves closer to content drawn from local neighborhoods of similar users and items. This median-based rule, combined with proximity-driven recommendations, eventually creates discrete ideological or thematic basins, leading users to converge around particular attractors. Such attractors persist despite random fluctuations, illustrating that the fundamental geometric principle underlying these updates can induce polarization from initially homogeneous states.

To better understand these patterns, consider scenarios where population size (Pop), user adaptation rate (Adapt), content production probability (Prod), and noise-based stability level (Stab) vary. The population size corresponds to the number of simulated users, mirroring different scales of online communities in real-world media ecosystems. The adaptation rate, controlled by the parameter $\alpha$, determines the fraction of the distance that a user travels toward the median of recommended content per iteration, reflecting how readily individuals adjust their perspectives when confronted with locally prevalent viewpoints. The production probability encapsulates the rate at which new content creators inject fresh material into the system, mirroring active news cycles or continuously updated feeds, while the stability level corresponds to the noise amplitude, capturing the unpredictability and personal idiosyncrasies of human decision-making. Thus, each parameter has a clear analogy in real-world recommendation contexts: large populations approximate large social platforms, high adaptation rates mimic users who strongly rely on local recommendations for opinion formation, high production probabilities reflect vibrant content ecosystems where new information rapidly accumulates, and low stability levels represent more deterministic users who seldom deviate from algorithmic guidance.

Figure \ref{fig:grid_paper_ready_8x1} shows a vertical series of eight final outcomes after 500 iterations for various parameter settings. Each panel corresponds to a distinct scenario defined by a certain population size, adaptation factor, production probability, and noise level. Under low adaptation and low production rates, users often end up forming fewer clusters and their convergence may be slower. As adaptation or production increases, the emergence of multiple dense clusters becomes pronounced. Even in scenarios with higher noise, the fundamental clustering tendency remains, although small perturbations can reshape or merge clusters over time.

\subsection{Quantitative Measures of Polarization and Parameter Sensitivities}
The formation of clusters is more than a visual artifact. It is quantifiable through various metrics that measure the extent of population segregation and ideological compartmentalization. The number of clusters, their internal variance, and the distances between them provide three key quantitative indicators. As time progresses, the population transitions from a nearly unimodal distribution to a configuration in which multiple stable clusters persist. These clusters are detected through density-based algorithms such as DBSCAN, and their emergence demonstrates that local similarity-driven recommendation logic fosters recognizable and persistent population substructures.

In Figure \ref{fig:clusters_paper_ready_4x2}, each subpanel shows the final spatial distribution of users under certain parameters settings after 500 iterations. The colors and densities highlight the formation of multiple discrete clusters. Different adaptation rates, production probabilities, and stability levels lead to variation in the final number and sizes of these clusters. Lower adaptation rates yield fewer, more loosely aggregated clusters, while higher adaptation rates coupled with higher production probabilities produce multiple tightly concentrated groups. Although increased noise can slightly blur the boundaries of the cluster, the fundamental pattern of clustering remains intact.

Figure \ref{fig:cluster_variances_paper_ready_4x2} presents the average cluster variance for the same scenarios. The variance reflects the spatial dispersion of users within each identified group. Smaller variances indicate that users have settled into tight formations, converging around local medians, and thus forming more coherent communities. Under moderate and high adaptation rates, the variance within each cluster decreases substantially, revealing that users gravitate toward their local attractors, thus embodying stable consensus formation. Increasing the population size amplifies these effects, as larger communities, under similar parameters, more clearly reveal the cluster structure that results from iterative median updates.

Figure \ref{fig:cluster_distances_paper_ready_4x2} provides a complementary perspective by examining inter-cluster distances. These distances represent the separation between different cluster centroids, offering a measure of how far apart subgroups have drifted ideologically or thematically. Larger inter-cluster distances signify more pronounced polarization, where sub-populations settle into isolated ideological niches. High production probabilities and moderate to high adaptation rates result in more pronounced cluster separations, reflecting environments in which rich and sustained content creation leads users to reinforce local medians and disregard alternative attractors. Even with noise, clusters remain distinctly separated, implying that random perturbations do not fundamentally obstruct the polarization process.

\begin{table}[t]
\centering
\scriptsize
\caption{Representative scenarios demonstrating how parameter changes influence final clustering outcomes. The table columns list the number of users ($N$), total iterations ($T$), adaptation factor ($\alpha$), production probability ($p_{\text{produce}}$), noise level ($\sigma_{\text{noise}}$), the final number of detected clusters, final cluster variance, final average inter-cluster distance, and final minimal inter-cluster distance. Each row corresponds to one parameter configuration. The results show that even subtle differences in adaptation rates or production probabilities can alter the number and size of final clusters, with larger populations and higher production rates frequently fostering more pronounced polarization. Noise modulates but does not eliminate these tendencies.}
\label{tab:sample_scenarios}
\begin{tabular}{rrrrrrrrr}
\toprule
\textbf{N} & \textbf{T} & \textbf{$\alpha$} & \textbf{$p_{\text{produce}}$} & \textbf{$\sigma_{\text{noise}}$} & \textbf{Final Clusters} & \textbf{Final Var.} & \textbf{Final Avg. Dist.} & \textbf{Final Min. Dist.}\\
\midrule
1000 & 500 & 0.010 & 0.100 & 0.005 & 2 & 0.014 & 1.106 & 1.106 \\
1000 & 500 & 0.010 & 0.100 & 0.010 & 2 & 0.019 & 1.002 & 1.002 \\
1000 & 500 & 0.010 & 0.100 & 0.020 & 2 & 0.039 & 1.404 & 1.404 \\
1000 & 500 & 0.010 & 0.200 & 0.005 & 3 & 0.008 & 1.134 & 0.682 \\
1000 & 500 & 0.010 & 0.200 & 0.010 & 3 & 0.008 & 0.876 & 0.713 \\
1000 & 500 & 0.010 & 0.200 & 0.020 & 2 & 0.039 & 1.047 & 1.047 \\
1000 & 500 & 0.010 & 0.300 & 0.005 & 3 & 0.007 & 1.210 & 0.972 \\
1000 & 500 & 0.010 & 0.300 & 0.010 & 2 & 0.062 & 0.540 & 0.540 \\
1000 & 500 & 0.010 & 0.300 & 0.020 & 2 & 0.037 & 1.194 & 1.194 \\
1000 & 500 & 0.020 & 0.100 & 0.005 & 3 & 0.001 & 0.855 & 0.456 \\
\bottomrule
\end{tabular}
\end{table}

\subsection{Statistical Patterns and Parameter Dependencies}
The parameter dependencies observed in these experiments are consistent and robust. Increasing the adaptation rate fosters more decisive movements toward local medians, accelerating the onset of cluster formation. Larger populations provide a clearer statistical backdrop, making the polarized patterns more evident since random fluctuations average out, leaving stable and persistent clusters. Higher production probabilities ensure that local attractors are consistently replenished with fresh content, anchoring communities, and preventing the uniform drifting of the entire population. Decreasing stability (i.e., adding more noise) slightly obscures boundaries and may occasionally merge or split clusters, but does not counteract the fundamental drive toward polarization.

As seen in Table \ref{tab:sample_scenarios}, different combinations of parameters lead to different equilibrium states. For a fixed population size and iteration count, altering the adaptation rate or production probability can increase the number of clusters from two to three, or modify the cluster variance and inter-cluster distances. Higher production probabilities and moderate adaptation factors frequently support more numerous clusters with reduced variance. Noise level changes can affect cluster tightness and the minimal distances between clusters, but the underlying polarization tendency remains evident.

These results confirm the theoretical insights and simplified mathematical proofs. They show that even a minimal geometric mechanism that relies on nearest-neighbor retrieval and median-based moves can yield multicluster equilibria from unimodal starting conditions. While increasing noise or reducing production probability slows down or slightly alters the nature of polarization, none of these modifications fundamentally removes the system's tendency to produce tightly knit, well-separated groups. This strongly supports the central claim that similarity-based recommendation systems, by guiding users toward locally prevalent content, can naturally induce polarized outcomes.

\section{Discussion}

The findings presented in this study contribute to a growing body of work exploring the socio-political and cultural implications of algorithmically curated content. Numerous scholars and commentators have voiced concerns that the logic of personalization, which underlies modern recommendation systems, may subtly push users into increasingly insular communities. Our model provides a theoretical lens through which this phenomenon can be understood: even a minimal and idealized similarity-based retrieval rule, absent of explicit bias or complex neural embeddings, can lead a once-homogeneous user population to fragment into multiple stable clusters.

The conceptual underpinning of our model resonates with the arguments put forth by influential works that have examined the consequences of selective exposure, echo chambers, and polarization. For instance, Pariser's \textit{The Filter Bubble: What the Internet Is Hiding from You} highlighted that individuals are often shown information that closely matches their existing preferences, effectively filtering out cross-cutting perspectives \cite{pariser2011filterbubble}. Sunstein's \textit{Republic.com} drew attention to how online fragmentation enables users to encounter primarily reinforcing viewpoints, a shift that undermines deliberative democratic ideals \cite{sunstein2001republic}. These intellectual foundations suggested a risk: that personalization logic entrenched in digital platforms could reinforce homogeneous groups. Our results offer a mathematical demonstration of how such logic, operationalized through median updates and local nearest-neighbor queries, can intrinsically generate polarization.

Empirical and theoretical studies have consistently supported the idea that online communication environments can foster echo chambers. For example, Dandekar, Goel, and Lee showed how homophily and biased assimilation contribute to opinion polarization in simplified network models \cite{dandekar2013biasedassimilation}. Likewise, Flaxman, Goel, and Rao provided empirical evidence of filter bubbles in news consumption patterns \cite{flaxman2016filterbubbles}, while Bakshy, Messing, and Adamic's work on diverse news exposure underscored how user-driven and algorithm-driven selection can limit ideological variety \cite{bakshy2015exposure}. The local median-seeking mechanism of our model offers a theoretical complement to such empirical findings, indicating that even neutral, distance-based recommendation strategies could steer populations toward partitioned landscapes of opinion.

Another dimension of this discourse relates to whether complex, modern recommender algorithms differ in essence from simpler neighbor-based heuristics. Our analysis, supported by insights from machine learning and information retrieval studies, suggests that the end-stage retrieval step of many advanced models reduces to a sophisticated form of nearest-neighbor search within a learned metric space. Embedding-based approaches, including deep neural networks and factorization methods, produce latent representations of users and items, and items are recommended based on their proximity to the user embedding. As works such as Garimella et al.'s analysis of political discourse on social media \cite{garimella2018politicaldiscourse}, Bail et al.'s study on exposure to opposing views \cite{bail2018exposure}, and Dubois and Blank's examination of echo chambers \cite{dubois2018echo} have shown, subtle retrieval biases can lead communities to become more ideologically isolated over time. Our theoretical result implies that the geometric structure embedded in these models can itself harbor a fundamental impetus toward clustering and, by extension, polarization.

Previous empirical literature has debated the magnitude and inevitability of polarization effects. While some works have found that social media usage correlates with increased polarization (as in Grinberg et al.'s study of fake news \cite{grinberg2019fakenews}), others like Dubois and Blank \cite{dubois2018echo} argue that the echo chamber narrative may be overstated, noting moderating effects of political interest and diverse media consumption. Our model does not claim that all individuals must be polarized by recommendation systems. Rather, it clarifies that the geometry of recommendation spaces contains a systematic push toward cluster formation. Real-world outcomes depend on myriad contextual factors: user agency, platform design, regulatory frameworks, and the interplay of economic and political interests.

Sunstein's later reflections on ''\#Republic'' also consider how algorithmic tailoring of information environments can be balanced with interventions that increase exposure to diverse viewpoints \cite{sunstein2017republic}. Similarly, the work of Prior \cite{prior2013media} and Barberá \cite{barbera2015socialmedia}, as well as Muthukrishna and Henrich's theoretical explorations \cite{muthukrishna2019theoryproblem}, suggest that not all network structures or content distributions yield the same polarization outcomes. The parameter sensitivity of our model aligns with these findings: changes in adaptation rates, production probabilities, population sizes, and noise can modulate the strength and speed of cluster formation. Higher production rates and moderate adaptation, for example, create more distinct and persistent clusters, while higher noise introduces mild instability but does not negate the clustering trend.

Deep neural embeddings and factorization models have been widely studied in machine learning literature, but less attention has been devoted to their long-term sociocultural consequences. Our approach connects the dots between technical recommendation logic and socio-political polarization, building on conceptual linkages addressed by works, such as Bail et al.'s empirical evidence \cite{bail2018exposure}. Even though machine learning research often focuses on optimizing accuracy, engagement, or user satisfaction, rarely does it incorporate the downstream effect of systematic group segmentation. By providing a clear and simplified theoretical demonstration that similarity-based recommendation inherently supports cluster formation, we encourage more interdisciplinary research that incorporates perspectives from political science, sociology, and media studies.

In bridging the gap between technical and social analyses, our model resonates with Grinberg et al.'s demonstration of how certain users self-segregate into 'fake news' clusters \cite{grinberg2019fakenews} can be seen as one manifestation of the underlying geometric tendency described here. Moreover, the interplay of content production and network structure documented by Garimella et al. \cite{garimella2018politicaldiscourse} directly aligns with our parameter sensitivity results, suggesting that richer local attractors foster more extreme polarization.

Our results imply that policy interventions, changes in platform design, and user-level strategies may be necessary to mitigate unwanted polarization. Echo chambers and filter bubbles are not mere rhetorical devices; they have a theoretical and computational basis grounded in the geometry of recommendation spaces. Recognizing this can guide platform designers to introduce mechanisms for diversity exposure or random content injection. Such strategies have been proposed in various contexts, including the subtle re-ranking methods or controlling recommendation diversity discussed in several algorithmic fairness and exposure studies. While this research does not prescribe a solution, it highlights that solutions must address the underlying geometric logic of similarity-based retrieval.

Thus, our extended discussion aligns with, and adds a mathematical scaffold to, a wide array of influential works that have examined polarization, echo chambers, and the implications of personalization. By linking these strands of scholarship, ranging from Pariser's filter bubble hypothesis \cite{pariser2011filterbubble} and Sunstein's deliberative ideals \cite{sunstein2001republic,sunstein2017republic}, to empirical findings by Flaxman et al. \cite{flaxman2016filterbubbles}, Bakshy et al. \cite{bakshy2015exposure}, and theoretical frameworks by Dandekar et al. \cite{dandekar2013biasedassimilation}, we show that at the heart of advanced recommendation models lies a simple geometric rule that can encourage fragmentation. The realization that complexity does not negate this fundamental similarity-based structure invites further interdisciplinary inquiry and careful consideration as our media ecosystems continue to evolve and shape collective understanding.

\section{Conclusion}
The analysis and simulations presented in this study uncover a fundamental truth about similarity-driven recommendation: even a simplified model, free of overt manipulative intent, can guide a population of users into distinct ideological clusters over time. This conclusion is based on a minimal mathematical principle rather than the complexity of modern machine learning architectures. While deep learning methods and sophisticated embedding techniques may enrich the texture of user-item relations, their ultimate logic, deeply entwined with geometric proximity queries, does not escape the underlying incentive to reinforce local similarity.

These findings have implications that transcend disciplinary boundaries. For scholars in the humanities and social sciences, the results lend theoretical weight to long-standing concerns about echo chambers and polarization. Digital ecosystems, governed by nearest-neighbor logic, may subtly reshape public discourse by encouraging segments of the population to settle into stable but isolated intellectual communities. For engineers, data scientists, and machine learning researchers, the conclusion emphasizes the importance of reflecting on the broader cultural and political outcomes of algorithmic personalization. It suggests that even the most elegant embeddings or state-of-the-art recommender models, designed to optimize relevance or engagement, cannot evade the intrinsic geometry that steers users into fragmentation.

As societies grapple with the moral and political dilemmas posed by online platforms, our results underline the need for deliberate design choices, policy interventions, and user-awareness initiatives. If similarity-based retrieval cannot be easily divorced from its capacity to foster self-reinforcing clusters, then frameworks that introduce diversity, serendipity, or controlled exposure to differing perspectives become crucial. Such interventions do not merely represent technical tweaks; they embody acts of cultural stewardship, ensuring that digital infrastructures do not inadvertently erode the pluralism and dialogic richness upon which democratic societies rest.

In sum, this work has provided a mathematically rigorous demonstration and computational validation that highlight the geometric origins of polarization in a similarity-based recommendation. It neither assigns sole responsibility to algorithms nor claims inevitability, but it does clarify the structural tendencies at play. By framing these tendencies in a manner accessible to both humanists and engineers, we hope to stimulate constructive debate, encourage interdisciplinary inquiry, and inspire thoughtful strategies aimed at preserving the complexity and heterogeneity of intellectual life.

\bibliography{output}
\bibliographystyle{unsrt}

\end{document}